\newcommand{\N}{\mathbb{N}}
\renewcommand{\H}{H}
\newcommand{\NE}{\ensuremath\text{N}\xspace}
\newcommand{\OPT}{\ensuremath\text{O}\xspace}
\newcommand{\PoS}{\text{PoS}}
\newcommand{\PoA}{\text{PoA}}
\newcommand{\POPoS}{\text{POPoS}}
\newcommand{\POPoA}{\text{POPoA}}
\newcommand{\cost}{\text{cost}}
\newcommand{\game}
\newcommand{\keywords}[1]{\par\addvspace\baselineskip\noindent\keywordname\enspace\ignorespaces#1}
\title{\mbox{Improving the $H_k$-Bound on the Price of Stability} in Undirected
Shapley Network Design Games}
\author{Yann Disser \inst{1} \and Andreas Emil Feldmann \inst{2} \and 
Max Klimm \inst{1} \and Mat\'u\v s Mihal\'{a}k \inst{3}}
\institute{
Institut f\"ur Mathematik, Technische Universit\"at Berlin, Germany
\email{\{disser,klimm\}@math.tu-berlin.de}
\and 
Combinatorics and Optimization Department,
University of Waterloo, Canada
\email{andreas.feldmann@uwaterloo.ca}
\and Institute of Theoretical Computer Science,
ETH Zurich, Switzerland
\email{matus.mihalak@inf.ethz.ch}
}
\date{}
\begin{document}
\maketitle

\begin{abstract}
  %
  %
  %
  In this paper we show that the price of stability of Shapley network design
  games on undirected graphs with $k$ players is at most
  $\smash{\frac{k^3(k+1)/2-k^2}{1+k^3(k+1)/2-k^2}\H_k}$ $\smash{= \bigl(1 -
  \Theta(1/k^4)\bigr)\H_k}$, where~$\H_k$ denotes the $k$-th harmonic number.
  This improves on the known upper bound of~$\H_k$, which is also valid for
  directed graphs but for these, in contrast, is tight. Hence, we give the first
  non-trivial upper bound on the price of stability for undirected Shapley
  network design games that is valid for an arbitrary number of players.
  Our bound is proved by analyzing the price of stability restricted to Nash
  equilibria that minimize the potential function of the game. 
  We also present a game with $k=3$ players in which such a restricted price of
  stability is $1.634$. This shows that the analysis of Bil\`o and
  Bove 
  (Journal of Interconnection Networks, Volume 12, 2011)
  is tight. In addition, we give an example for three
  players that improves the lower bound on the (unrestricted) price of stability
  to $1.571$.
  \keywords{undirected Shapley network design game, price of stability,
potential-optimal price of stability, potential-optimal price of anarchy}
\end{abstract}

\section{Introduction}
\looseness=-1

Infrastructure networks are the lifelines of our civilization.
Through generations 
a tremendous effort has been undertaken to cover the earth's surface with
irrigation 
canal systems, sewage lines, road networks, railways, and -- more recently --
data 
networks. Some of these infrastructures are initiated and planned 
by a central authority that designs the network and decides on its topology and 
dimension. Many networks, however, arise as an outcome of actions of selfish
individuals who 
are motivated by their own connectivity requirements rather than by optimizing
the 
overall network design. A prominent example of the latter phenomenon is the rise
of 
the Internet. 
In order to quantify the efficiency of networks, it is crucial to understand 
the processes that govern their formation.
Anshelevich et al.\,\cite{Anshelevich+etal/2008} proposed a particularly
elegant model for such processes, which is now known as the Shapley network
design
game or the network design game with fair cost allocation (for an overview of
other
models for network formation, see \cite{Tardos+Wexler/2007}).

%
%

%
The Shapley network design game is played by $k$ players $1,2,\ldots,k$ on a
graph $G=(V,E)$ with positive edge-costs $c_e\in\mathbb{N}$. Each player $i$ has
associated with it a source-target pair $s_i,t_i \in V$ of vertices that she
needs to connect with a simple path in $G$. The choice of such a path is called
a \emph{strategy} of the player, and a collection consisting of one strategy for
each
player is called a \emph{strategy profile}. 
The cost $c_e$ of every edge $e$ is shared equally among the players 
using it. Each player $i$ aims at choosing a path of smallest possible 
(individual) cost to herself. This cost is defined as the sum of the cost shares
for player~$i$ along the path.
Players are selfish in that they only care about their own costs. In
particular, 
they do not care about the social cost, defined as the sum of all players'
individual
costs and denoted by $\text{cost}(P)$ for a strategy profile $P$.

A \emph{Nash equilibrium} of a Shapley network design game is a strategy profile
in which no player $i$ can switch to an $s_i$-$t_i$ path that yields her a
smaller individual cost.
To quantify the effect of the selfish behavior, it is natural to compare the 
social cost of a Nash equilibrium of the game with the smallest social cost 
among all possible strategy
profiles~\cite{Anshelevich+etal/2008,Tardos+Wexler/2007}.
Several quantifications of selfish behavior have been studied, based on
whether we restrict ourselves to a specific set of Nash equilibria, and whether
we 
compare the worst or best such equilibrium in terms of social cost.
In this paper, we adopt the notion of the \emph{price of stability}, introduced
by Anshelevich et al.\,\cite{Anshelevich+etal/2008}.
Denoting by $\mathcal{N}$ the set of all Nash equilibria and by $\OPT$ a
strategy
profile that minimizes the social cost of a game, the price of
stability of the game is defined as the ratio~$\min_{\NE \in \mathcal{N}}
\text{cost}(\NE)/\text{cost}(\OPT)$.

Anshelevich et al.\,\cite{Anshelevich+etal/2008} observed that Shapley network
design games always have a Nash equilibrium by showing that they belong to the
class of congestion games. For these games, the existence of a Nash equilibrium
is always guaranteed, as shown by Rosenthal~\cite{Rosenthal/1973}. Rosenthal's
existence proof relies on a potential function argument. That is, he showed that
there exists a function $\Phi$ that maps strategy profiles to real numbers and
has the property that if any one player changes her strategy unilaterally,
then the value of $\Phi$ changes by the exact same value as the cost of the
player.
This observation, together with the finiteness of the space of all strategy
profiles, implies the existence of a Nash equilibrium. In particular, any
\emph{potential minimum}, i.e., a strategy profile that globally minimizes the
potential function, is a Nash equilibrium. 
The potential function of a game is unique up to an additive constant (see
Monderer and Shapley~\cite{Monderer+Shapley/1996}).
Using the special form of the potential function for Shapley network design
games, Anshelevich et al.\,\cite{Anshelevich+etal/2008} showed that the price of
stability of any game is at most $\H_k=\sum_{i=1}^k\frac{1}{i}$, the $k$-th
\emph{harmonic number} (which is of order~$\log k$).
This upper bound is tight for games played on directed graphs. That is, there
are Shapley 
network design games on directed graphs~\cite{Anshelevich+etal/2008} for which
the price of stability is arbitrarily close to $\H_k$.

The situation is different for \emph{undirected} Shapley network design games, 
i.e., games played on undirected graphs. As the same potential arguments remain
valid, the price of stability of any game is still at most $\H_k$. Yet, the
largest known price of stability (asymptotically) is a constant, more precisely
$348/155 \approx 2.245$ (see Bil\`{o} et al.\,\cite{Bilo+etal/2010}).
This leaves the question of the worst-possible price of stability in undirected 
Shapley network design games with $k$ players wide open.
Remarkably, the largest known price of stability, as provided by Bil\`{o} et
al., does not come from a simple example, but from a complicated construction. 
Previously known worst-case games had a price of stability of $4/3 \approx
1.333$~\cite{Anshelevich+etal/2008}, $12/7 \approx 1.714$~\cite{Fiat+etal/2006},
and $42/23 \approx 1.8261$~\cite{Christodoulou+etal/2009}.
Despite numerous attempts 
\cite{Bilo+Bove/2011,Anshelevich+etal/2008,Bilo+etal/2010,Fiat+etal/2006,Christodoulou+etal/2009,Li/2009}
to narrow the gap of the bounds on the price of stability, there has been little
progress in terms of numerical results.
It is generally believed that the price of stability is smaller than $H_k$,
and we confirm this belief in this paper. For small values of $k$ some smaller
upper bounds are known. For $k=2$ players, the price of stability is
at most $4/3 < H_2=3/2$ and this is 
tight~\cite{Anshelevich+etal/2008,Christodoulou+etal/2009}. Bil\`{o} and
Bove~\cite{Bilo+Bove/2011} analyzed the case of $k=3$ players and showed that
the price of stability of any such game is at most $1.634<\H_3=1.83\bar{3}$. For
this case, however, a considerable gap remains, as the worst example known has a
price of stability of $74/48 \approx 1.542$~\cite{Christodoulou+etal/2009}.
Thus, already for $k=3$ players, the exact worst-case price of stability is
unknown. 

For several special cases, one can derive better upper bounds on the price of
stability. If all players share the same terminal then the price of stability is
at most $O(\log k/\log \log k)$~\cite{Li/2009}. If in addition every vertex is
the source of at least one player, then the price of stability further degrades
to $O(\log \log k)$~\cite{Fiat+etal/2006}.

Many of the mentioned upper bounds are not only valid for the best Nash
equilibrium of a game, but also for a very specific one -- the potential
minimum. Potential minima have desirable stability properties. For
example, they are reached by certain learning dynamics for players that do not
always play rationally (see Blume~\cite{Blume/1995}). 
This motivates to explicitly study the ratio between the cost of a potential
minimum and that of a profile minimizing the social cost -- a \emph{social
optimum}.
To stress the described stability
properties of potential minima, Asadpour and Saberi~\cite{Asadpour+Saberi/2009}
called this ratio the \emph{inefficiency ratio of stable equilibria}. 
Kawase and Makino~\cite{Kawase+Makino/2012} called the very same ratio the
\emph{potential-optimal price of anarchy}. They also define the
\emph{potential-optimal price of stability} of a game in the obvious way as the
ratio between the cost of a best potential minimum and that of a social
optimum.
%
They prove that the potential-optimal price of anarchy of undirected Shapley
network design games is at most $O(\sqrt{\log k})$ for the special case where
all players share the same terminal node, and where every vertex is the source
of at least one player. They give a construction of a game with
potential-optimal price of anarchy $\Omega(\sqrt{\log \log k})$.

\subsection*{Our Contribution}
\looseness=-1

Our main result shows that the price of stability in undirected
Shapley network design games is at most
$\frac{k^3(k+1)/2-k^2}{1+k^3(k+1)/2-k^2}\H_k=(1-\Theta(1/k^4))\H_{k}$. Thus, we
provide the first general upper bound that shows that the price of stability 
for $k$ players is strictly smaller than $H_k$. 
To prove this upper bound, we generalize the techniques of Christodoulou
et al.\,\cite{Christodoulou+etal/2009} to any number of players. In short,
similar to Christodoulou et al., we obtain a set of inequalities relating the
cost of any Nash equilibrium to the cost of a social optimum. We then combine
these in a non-trivial way to obtain the claimed upper bound,
additionally assuming that the Nash equilibrium has a smaller potential than the
social optimum. 
Interestingly, the resulting upper bound is tight for the case of $k=2$ players.

As an additional contribution, we provide an example of a game with $k=3$
players in which the potential-optimal price of stability is $1.634$. Thus, we
show that the upper bound on the potential-optimal price of anarchy given by
Bil\`{o} and Bove~\cite{Bilo+Bove/2011} is
tight. This result implies that for three players the upper bound
on the price of stability cannot be further improved via potential-minimizers.
This is in contrast to the directed case, for which a simple inequality
relates the cost of the potential minimum to that of a social optimum, giving a
tight bound on the price of stability. We believe that this observation provides
an insight as to why the undirected case is much harder to tackle than the
directed one.

We note that in our tight example for the potential-optimal price of
stability/anarchy, the social optimum is also a Nash equilibrium, and thus the
example provides no new lower bounds on the price of stability.
Our third contribution however is a new lower bound on the price of stability.
We provide an example of a game with three players and price of stability
$1.571$, which improves on the previous best lower bound of $74/48 \approx
1.542$~\cite{Christodoulou+etal/2009}.

%
\section{Problem Definition and Preliminaries}
\label{sec:definition}
\looseness=-1

Let $G=(V,E)$ be an undirected graph with a positive \emph{cost} $c_e>0$
for every edge $e\in E$.
The \emph{Shapley network design game} is a strategic game of $k$ players. Every
player $i\in\{1,\ldots,k\}$ has a dedicated pair of vertices $s_i,t_i\in V$,
that we call her \emph{source} and \emph{target}, respectively.
The \emph{strategy space} of player $i$ is the collection $\mathcal{P}_i$ of all
paths $P_i\subseteq E$ between $s_i$ and $t_i$. Every such path $P_i$ is called 
a \emph{strategy}.
A \emph{strategy profile} $P$ is a tuple $(P_1,\ldots,P_k)$ of $k$ strategies,
$P_i\in\mathcal{P}_i$. 
Given a strategy profile $P$, we say that player $i$ \emph{plays} strategy
$P_i$ in $P$.
The \emph{cost to player} $i$ in a strategy profile $P=(P_1,\ldots,P_k)$ is 
$$
  \text{cost}_i(P) = \sum_{e\in P_i} \frac{c_e}{k_e},
$$
where $k_e$ denotes the number of paths $P_i$ in $P$ such that $e\in P_i$. That
is, $k_e$ is the number of players that use edge $e$.
The goal of every player is to minimize her cost.
A \emph{Nash equilibrium} is a strategy profile $\NE = (\NE_1,\ldots,\NE_k)$,
$\NE_i \in \mathcal{P}_i$, such that no player $i$ can improve her cost by
playing a different strategy. That is, for every~$i$ and every 
$\NE_i'\in \mathcal{P}_i$, it holds that 
$\text{cost}_i(\NE) \leq \text{cost}_i(N_i',N_{-i})$, where 
$(N_i',N_{-i})$ is a shorthand for 
$(\NE_1,\ldots,\NE_{i-1},\NE_i',\NE_{i+1},\ldots,\NE_k)$.
With a slight abuse of terminology we will identify the game played on the graph
$G$ with the graph itself and we write $\mathcal{N}(G)$ to denote the set of
Nash equilibria of $G$.

Observe that the edges of any strategy profile $P$ induce a graph $(V,\cup_i
P_i)$, which we call the \emph{underlying network}. We denote the edge set 
of this graph by~$E(P)$.
The \emph{social cost}, or simply the \emph{cost} of a strategy profile $P$,
denoted as~$\text{cost}(P)$, is the sum of the players' individual costs.
Observe that the social cost is equal to the total cost of the edges in the
played strategies, i.e.,
\begin{displaymath}
  \text{cost}(P) = \sum_{i=1}^k\text{cost}_i(P) = \sum_{e\in E(P)} c_e.
\end{displaymath}

A strategy profile that minimizes the social cost is called the \emph{social
optimum}.
The \emph{price of stability} of a game $G$, denoted by $\PoS(G)$, is defined as
the cost of the best Nash equilibrium of $G$ divided by the cost of a social
optimum $\OPT(G)$ of~$G$. That is, $\PoS(G) = \min_{\NE \in \mathcal{N}(G)}
\cost(\NE) / \cost(\OPT(G))$.
The \emph{price of anarchy} of~$G$, $\PoA(G)$ for short, is obtained by
replacing $\min$ by $\max$ in this definition.
%

A Shapley network design game is a potential
game~\cite{Anshelevich+etal/2008,Rosenthal/1973}. That is, there is a function
$\Phi:\mathcal{P}_1 \times \ldots \times \mathcal{P}_k\rightarrow \mathbb{R}$
such that, for every strategy profile $P$, whenever any player $i$ changes her
strategy from $P_i$ to $P_i'$, then
$\text{cost}_i(P) - \text{cost}_i(P_i',P_{-i}) = \Phi(P) - \Phi(P_i',P_{-i})$.
Rosenthal's (exact) potential function has the form
\begin{displaymath}
  \Phi(P) = \sum_{e\in E(P)} \sum_{i=1}^{k_e}
  \frac{c_e}{i}=\sum_{e\in E(P)}\H_{k_e}\cdot c_e\text{,}
\end{displaymath}
where $\H_j$ denotes the $j$-th Harmonic number $\sum_{i=1}^{j}\frac{1}{i}$.
The potential function is unique up to an additive
constant~\cite{Monderer+Shapley/1996}.

Motivated by the particular stability properties of potential minima, Kawase and
Makino~\cite{Kawase+Makino/2012} introduced two notions to quantify the
inefficiency of potential minimizers. For a game $G$ let $\mathcal{F}(G)$ denote
the set of potential minimizers of $G$, i.e., strategy profiles of $G$ that
minimize the potential function of $G$. The \emph{potential-optimal price of
stability} of $G$ is then defined as $\POPoS(G) = \min_{\NE \in \mathcal{F}(G)}
\cost(\NE) / \cost(\OPT(G))$ and the \emph{potential-optimal price of anarchy}
is
defined as $\POPoA(G) = \max_{\NE \in \mathcal{F}(G)} \cost(\NE) /
\cost(\OPT(G))$.
Since $\mathcal{F}(G) \subseteq \mathcal{N}(G)$, clearly, for any game $G$,
$$\PoS(G) \leq \POPoS(G) \leq \POPoA(G) \leq \PoA(G).$$

For a fixed number of players $k \geq 2$, we are interested to bound the
\emph{worst-case} price of stability of games with $k$ players. For a formal
definition, let $\mathcal{G}(k)$ denote the set of all games with $k$ players.
The price of stability of undirected Shapley network design games with $k$
players is defined as $\PoS(k) = \sup_{G \in \mathcal{G}(k)} \PoS(G)$.
$\POPoS(k)$ and $\POPoA(k)$ are defined analogously. 

Using Rosenthal's potential function $\Phi$, we can bound the potential-optimal
price of anarchy (and, thus, the price of stability) from above by $\H_k$ as
follows~(cf.~\cite{Anshelevich+etal/2008}).
Let $\OPT$ be a social optimum. For a potential minimum $\NE$, we have
$\Phi(\NE)\leq\Phi(\OPT)$. Using this together with $\text{cost}(\NE) \leq
\Phi(\NE)$ and $\Phi(\OPT)\leq H_k\cdot \cost(\OPT)$ we obtain
$\text{cost}(\NE)\leq H_k\cdot \text{cost}(\OPT)$, as claimed.

%

%
We define $\smash{\NE^i}$ and $\smash{\OPT^i}$, for $i\in\{1,\ldots,k\}$, to be
the sets of edges of $\NE$ and~$\OPT$ that are used by exactly $i$ players,
respectively. Thus, \smash{$E(\NE) = \bigcup_{i=1}^k \NE^i$} and \smash{$E(\OPT)
= \bigcup_{i=1}^k \OPT^i$}. For a set of edges $M\subseteq E$, we will denote by
$|M|$ the total cost of the edges in $M$, i.e., $\smash{|M|=\sum_{e\in M} c_e}$.
This allows us to express the value of the potential function for $\NE$ and
$\OPT$ by $\smash{\Phi(\NE)=\sum_{j=1}^{k}\H_{j}|\NE^{j}|}$ and
$\smash{\Phi(\OPT)=\sum_{j=1}^{k} \H_{j}|\OPT^{j}|}$, respectively.

\section{A General Upper Bound}
\label{sec:UB-k-players}

\looseness=-1 
In this section we derive an upper bound on the price of stability of
\emph{any} undirected Shapley network design game. 
The main idea to show our upper bound follows that of Christodoulou et
al.\,\cite{Christodoulou+etal/2009}, which they used for deriving an upper bound
of $33/20=1.65$ for the case of $k=3$ players.
By the definition of Nash equilibria, no player can change her chosen
$s_i$-$t_i$ path in such a profile and thereby improve her cost. 
We will consider a specific change of strategy for each player, which gives us
an inequality that relates the costs of the edges used in the Nash equilibrium
to those used in the social optimum.
We then combine this inequality in a non-trivial way with another that is gained
from the fact that we consider a potential minimum. This allows us to obtain
an upper bound of $\smash{\frac{k^3(k+1)/2-k^2}{1+k^3(k+1)/2-k^2}\H_{k}=(1 -
\Theta(1/k^4))\H_k}$ on the price of stability.
%


Consider a Nash equilibrium $\NE$ and a social optimum $\OPT$ of a given Shapley
network design game with $k$ players. 
In undirected graphs, if $\OPT^k$ is non-empty any player $i$ can use paths in
the underlying network of the optimum~$\OPT$ to connect its terminals to the
source and the target of another player $j$. 
By additionally using the path of player $j$ in the Nash equilibrium~$\NE$,
we obtain a valid strategy for player~$i$. This is the specific alternative
strategy of player $i$ which we use to relate the cost of the Nash equilibrium
to the social optimum. 

We will additionally use the following observation about the structure of social
optima. Observe that any social optimum $\OPT$ forms a forest (because from any
cycle we could remove an edge and thereby decrease the cost).
If $\OPT^k$ is non-empty, i.e., some edges are shared among all players in the
optimum, the edges of $E(\OPT)\setminus\OPT^k$ form two trees such that every
player has one terminal in each of the trees. 
Let~$\OPT^+$ denote the edge set of the larger tree, and~$\OPT^-$ denote the
edge set of the smaller one (measured in terms of the social cost). We have
$\OPT^+\cup\OPT^-=E(\OPT)\setminus\OPT^k$ and, by the definition,
$|\OPT^+|\geq|\OPT^-|$.
%
%
Every tree has a closed walk that visits every vertex at least once and every
edge exactly twice (for example, a depth-first traversal). We consider such a
walk in the tree given by the edges in $\OPT^+$, and use it to order the
players.
Without loss of generality, if $\OPT^{k}\neq\emptyset$, the players are numbered
such that there is a closed walk in $\OPT^+$ that visits the terminals in the
order given by the players' numbers, while using each edge exactly twice. We say
that the players are in \emph{major-tree order}.

Consider the edges of a Nash equilibrium which are not used by all $k$ players.
The following lemma bounds the cost of these edges with respect to the cost of
a social optimum.

\begin{lemma}
Given a game $G$, let $\NE=(\NE_1,\ldots,\NE_k)$ be a Nash equilibrium and
$\OPT=(\OPT_1,\ldots,\OPT_k)$ a social optimum with $\OPT^{k}\neq\emptyset$.
Then,
  \begin{displaymath}
    \sum_{j=1}^{k-1}|\NE^{j}|\leq (k^2(k+1)/2-k)
\sum_{j=1}^{k-1}|\OPT^j|\text{.}
  \end{displaymath}
\label{lem:pot-min bound}
\end{lemma}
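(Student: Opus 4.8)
The plan is to exploit the Nash condition through the alternative strategies sketched above, in which each player reroutes through another player's equilibrium path, and then to combine the resulting inequalities across all players. For an ordered pair $(i,j)$ let $a_i$ and $b_i$ denote the terminals of player $i$ in $\OPT^+$ and $\OPT^-$, respectively, and define the deviation $\NE_i'$ that follows the $\OPT^+$-tree from $a_i$ to $a_j$, traverses $\NE_j$, and then follows the $\OPT^-$-tree from $b_j$ to $b_i$ (reducing the resulting walk to a simple path can only decrease its cost). Since the two trees are joined only through $\OPT^k$ or through equilibrium edges, routing via $\NE_j$ keeps the deviation inside $\OPT^+\cup\OPT^-\cup E(\NE)$ and never touches $\OPT^k$; this is exactly what makes the final bound depend only on $\sum_{j<k}|\OPT^j|=|\OPT^+|+|\OPT^-|$. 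Writing the Nash inequality $\cost_i(\NE)\le\cost_i(\NE_i',\NE_{-i})$ for this deviation, the key observation is that every edge of $\NE^k$ lies on $\NE_j$ and is still used by all $k$ players after the swap, so its contribution $|\NE^k|/k$ occurs identically on both sides and cancels. What remains is a clean relation between the non-shared equilibrium costs $c_i:=\sum_{e\in\NE_i\setminus\NE^k}c_e/k_e$, namely $c_i\le T_{ij}+c_j$, where $T_{ij}$ is the total cost of the two tree paths and each of player~$i$'s shares on a reused edge of $\NE_j$ is bounded by player~$j$'s own share. Note that $\sum_i c_i=\sum_{j=1}^{k-1}|\NE^j|$ is precisely the quantity to be bounded.

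Next I would control the tree paths. Numbering the players in major-tree order lets me charge the $\OPT^+$-portions to a depth-first walk of the larger tree: along any chain $a_1\to a_2\to\cdots\to a_k$ each edge of $\OPT^+$ is traversed at most twice in total, so these portions aggregate to at most $2|\OPT^+|$. The minor tree admits no compatible ordering, so I bound each individual $\OPT^-$-path crudely by the whole tree cost $|\OPT^-|$, accepting a per-term loss that the counting must later absorb.

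Finally I would combine the family $\{c_i\le T_{ij}+c_j\}$ into the stated inequality. Because each inequality only constrains a difference $c_i-c_j$, summing them symmetrically makes the equilibrium costs cancel and yields nothing useful; the entire difficulty is to choose the multiset of pairs and their multiplicities so that the $c_j$ on the right-hand sides carry strictly less total weight than the $c_i$ on the left, leaving $\sum_i c_i$ with a positive coefficient while the tree terms accumulate to the constant $k^2(k+1)/2-k=k\sum_{\ell=2}^{k}\ell$. I expect the correct scheme to route, for each of the $k$ players, through the other players in major-tree order, so that the major-tree paths are charged $k$ times against the depth-first walk while the minor-tree paths enter with multiplicities $2,3,\dots,k$; this reproduces exactly the factorization $k\cdot\sum_{\ell=2}^{k}\ell$ visible in the coefficient.

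The main obstacle is precisely this final combination: arranging the asymmetry that defeats the cancellation inherent in the difference inequalities (equivalently, anchoring the system, since adding a constant to every $c_i$ preserves all relations $c_i\le T_{ij}+c_j$), while simultaneously keeping the uncontrolled minor-tree contributions within the cubic budget. Verifying that the per-term loss on $\OPT^-$ and the factor-two loss on the depth-first walk compound to exactly $k^2(k+1)/2-k$, rather than something larger, is the delicate bookkeeping I expect to dominate the work.
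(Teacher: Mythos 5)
Your deviation paths and the way you bound their $\OPT$-portions (major-tree order, each edge of $\OPT^+$ charged twice, the $\OPT^-$ part absorbed crudely) match the paper's construction, and the identity $\sum_i c_i=\sum_{j=1}^{k-1}|\NE^j|$ is correct. But the combination step, which you yourself flag as the main obstacle, is not merely unfinished --- in the form you have set it up it is provably impossible. Each of your relations $c_i\le T_{ij}+c_j$ carries exactly one $c$-variable with coefficient $1$ on each side, so \emph{every} nonnegative linear combination of them has equal total $c$-weight on the two sides; equivalently, as you observe, the whole system is invariant under $c_i\mapsto c_i+C$, and hence cannot imply any upper bound on $\sum_i c_i$ whatsoever. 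No clever choice of pairs and multiplicities can fix this. The damage is done at the moment you bound player $i$'s share on a reused edge of $\NE_j$ by player $j$'s own share, i.e.\ when you replace $c_e/(k_e+1)$ by $c_e/k_e$ for $e\in\NE_j\setminus\NE_i$: that replacement discards exactly the quantity the proof lives on.

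The paper keeps the Nash inequality in the sharper form
\begin{displaymath}
\sum_{e\in\NE_i\setminus\NE_j}\frac{c_e}{k_e}-\sum_{e\in\NE_j\setminus\NE_i}\frac{c_e}{k_e+1}\;\le\;\sum_{e\in Q_i}c_e\,,
\end{displaymath}
writes it for both neighbours $j=i+1$ and $j=i-1$ in the major-tree order, and sums over $i$. After re-indexing, the left-hand side telescopes not to zero but to $\sum_i\sum_{e\in\NE_i\setminus\NE_{i\pm1}}c_e\bigl(\tfrac{1}{k_e}-\tfrac{1}{k_e+1}\bigr)=\sum c_e/\bigl(k_e(k_e+1)\bigr)$; every edge of $E(\NE)\setminus\NE^k$ is hit at least once in each of the two sums because some consecutive pair of players disagrees on it, giving a lower bound of $\tfrac{2}{k(k-1)}\sum_{j=1}^{k-1}|\NE^j|$. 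The right-hand side is bounded by $(k+2)\sum_{j=1}^{k-1}|\OPT^j|$ (each $\OPT^+$ edge twice, each $\OPT^-$ edge at most $k$ times, plus $|\OPT^-|\le|\OPT^+|$), and the constant arises as $\tfrac{k(k-1)}{2}(k+2)=k^2(k+1)/2-k$ --- not via the factorization $k\sum_{\ell=2}^k\ell$ you conjecture. So the missing idea is precisely the surviving $1/(k_e(k_e+1))$ term; without it your approach cannot close.
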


\begin{proof}
\looseness=-1
\begin{figure}[t]
\centering
\includegraphics{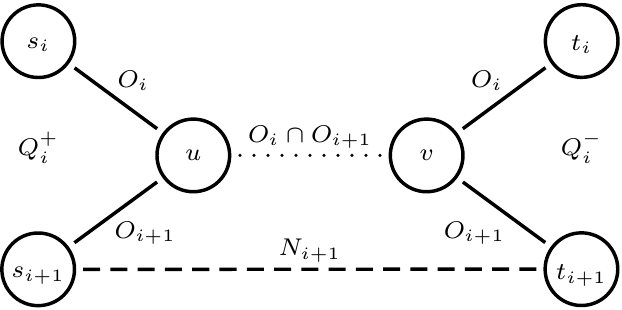}
\caption{Constructing the path $P_i$ with $\OPT_i$, $\OPT_{i+1}$, 
$\NE_{i+1}$ (dashed line), and $Q_i$ (continuous lines). Note
that $\OPT_i\cap\OPT_{i+1}$ (dotted line) is not part of~$P_i$.}
\label{fig:upperbound}
\end{figure}
For every player $i$, we construct an $s_{i}$-$t_{i}$ path $P_i$
(cf.\,Figure~\ref{fig:upperbound}) with the property that every edge on $P_i$ is
either in $\NE_{i+1}$ or not in $\OPT^{k}$. In the following we understand
indices modulo $k$, i.e. $k+1\equiv 1$ and $0\equiv k$. Let $u,v$ be the first
and last vertex on $\OPT_{i}$ that are also on $\OPT_{i+1}$. These vertices are
well defined as $\OPT^k$ is non-empty, and thus $\OPT_{i}\cap\OPT_{i+1}$ is
non-empty. Assume that $u$ lies before $v$ on $\OPT_{i+1}$ (otherwise, exchange
$s_{i+1}$ and $t_{i+1}$ in the following). 
Let $P_{i}$ be the path from $s_{i}$ to $t_{i}$ that
\begin{itemize}
  \item first follows $\OPT_{i}$ until $u$, 
  \item then $\OPT_{i+1}$ (backwards) from $u$ to $s_{i+1}$, 
  \item then $\NE_{i+1}$ from $s_{i+1}$ to $t_{i+1}$, 
  \item then $\OPT_{i+1}$ (backwards) from $t_{i+1}$ to $v$, 
  \item and finally $\OPT_{i}$ from $v$ to $t_{i}$.
\end{itemize}
In case $P_i$ contains cycles, we skip them to obtain a simple path. It is easy
to verify that every edge on $P_i$ either lies on $\NE_{i+1}$ or is not in
$\OPT_{i}\cap\OPT_{i+1}$. Thus, $P_i$ has the desired property. In the
following, let $Q_i$ denote the set of edges of $P_i$ that are also contained
in~$E(\OPT)$.

Since $\NE$ is a Nash equilibrium, player $i$ cannot improve her cost by
choosing path $P_i$. Therefore, $\text{cost}_i(\NE) \leq
\text{cost}_i(P_i,\NE_{-i})$. 
If $k_e$ is the number of players using edge $e$ in $\NE$, this inequality
amounts to 
\begin{displaymath}
  \sum_{e\in\NE_i} \frac{c_e}{k_e} \leq \sum_{e\in P_i\cap\NE_i}
  \frac{c_e}{k_e}+\sum_{e\in P_i\setminus\NE_i} \frac{c_e}{k_e+1}\textrm{.}
\end{displaymath}
By the properties of~$P_i$, the right-hand side of this inequality can be upper
bounded by 
\begin{displaymath}
  \sum_{e\in Q_i} c_e +
  \sum_{e\in\NE_{i+1}\cap\NE_i}\frac{c_e}{k_e} +
  \sum_{e\in\NE_{i+1}\setminus\NE_i}\frac{c_e}{k_e+1}\textrm{.}
\end{displaymath}
By shifting all terms not depending on $Q_i$ to the left-hand side of the
resulting inequality, we get
\begin{equation}
  \label{eq:NEvsQ1}
  \sum_{e\in\NE_i\setminus\NE_{i+1}} \frac{c_e}{k_e} - 
  \sum_{e\in\NE_{i+1}\setminus\NE_i} \frac{c_e}{k_e+1} 
  \leq \sum_{e\in Q_i} c_e\textrm{.}
\end{equation}

We further consider another alternative strategy $\widehat P_i$ of player $i$.
Path $\widehat P_i$ is defined similarly as path $P_i$, but now with respect to
player $i-1$.
%
%
That is, $\widehat P_i$ uses the edges of $\OPT_i$, $\OPT_{i-1}$, and
$\NE_{i-1}$ to connect $s_i$ to $t_i$ and does not contain any edges from
$\OPT_i\cap\OPT_{i-1}$. 
Let $\widehat Q_i$ denote the set of edges of $\widehat P_i$ also contained in
$E(\OPT)$. Using the same arguments as above on $\widehat P_i$ we can derive an
analogous inequality as Inequality~\eqref{eq:NEvsQ1} for edges in $\NE_i$,
$\NE_{i-1}$, and $\widehat Q_i$, i.e, we get 
\begin{equation}
  \label{eq:NEvsQ1hat}
  \sum_{e\in\NE_i\setminus\NE_{i-1}} \frac{c_e}{k_e} - 
  \sum_{e\in\NE_{i-1}\setminus\NE_i} \frac{c_e}{k_e+1} 
  \leq \sum_{e\in \widehat Q_i} c_e\textrm{.}
\end{equation}

Adding Inequalities~\eqref{eq:NEvsQ1} and~\eqref{eq:NEvsQ1hat}, and then summing
over all $i$ gives
\begin{align}
  \label{eq:NEvsQ}
  \sum_{i=1}^k\left(\sum_{e\in\NE_i\setminus\NE_{i+1}} \frac{c_e}{k_e} - 
  \sum_{e\in\NE_{i+1}\setminus\NE_i} \frac{c_e}{k_e+1} +
  \sum_{e\in\NE_i\setminus\NE_{i-1}} \frac{c_e}{k_e} - 
  \sum_{e\in\NE_{i-1}\setminus\NE_i} \frac{c_e}{k_e+1}\right) \nonumber \\ 
  \leq
  \sum_{i=1}^k\left(\sum_{e\in Q_i} c_e + \sum_{e\in \widehat Q_i} c_e\right).
\end{align}

We bound the left-hand side and the right-hand side of
Inequality~\eqref{eq:NEvsQ} separately, starting with the left-hand side. Since
indices are modulo $k$, we may shift the index in the second and the fourth sum
on the left-hand side by, respectively, substracting and adding one to the
index. This lets us combine the first and the fourth sum to
\begin{displaymath}
  \sum_{e\in\NE_i\setminus\NE_{i+1}} \frac{c_e}{k_e} -
  \sum_{e\in\NE_{i}\setminus\NE_{i+1}}\frac{c_e}{k_e+1} =
  \sum_{e\in\NE_i\setminus\NE_{i+1}} \frac{c_e}{k_e(k_e+1)}\textrm{,}
\end{displaymath}
and analogously the second and the third sum to
$\sum_{e\in\NE_i\setminus\NE_{i-1}} \frac{c_e}{k_e(k_e+1)}$, to obtain
\begin{displaymath}
  \sum_{i=1}^k\sum_{e\in\NE_i\setminus\NE_{i+1}}
  \frac{c_e}{k_e(k_e+1)} + \sum_{i=1}^k\sum_{e\in\NE_i\setminus\NE_{i-1}}
  \frac{c_e}{k_e(k_e+1)}\textrm{.}
\end{displaymath}
Each of the two resulting sums counts each edge in $E(\NE)\setminus \NE^k$ at
least once. This is because for any edge $e$ used by at least one player but not
all of them, there is a pair of players with consecutive indices (modulo
$k$) such that $e$ is used in $\NE$ by one of the players but not the other.
Thus, we can lower bound the above term by
\begin{displaymath}
  2\hspace{-.4cm}\sum_{e\in E(\NE)\setminus\NE^k}\frac{c_e}{k_e(k_e+1)} = 
  \sum_{j=1}^{k-1} \frac{2}{j(j+1)} |\NE^j| \geq
  \frac{2}{k(k-1)}\sum_{j=1}^{k-1}|\NE^j|\text{,}
\end{displaymath}
which gives a lower-bound on the left-hand side of Inequality~\eqref{eq:NEvsQ}.

The right-hand side of Inequality~\eqref{eq:NEvsQ} can be bounded by exploiting
the major-tree order of the players. We first only bound the sum depending on
$Q_i$. We denote the two parts of $Q_i$ that lie in the larger and smaller parts
of $E(\OPT)\setminus\OPT^k$ by $Q^+_i=Q_i\cap \OPT^+$ and $Q^-_i=Q_i\cap
\OPT^-$, respectively. Note that, by construction of $P_i$, there are no edges
of $O^k$ in $Q_i$. Thus, we get 
\begin{displaymath}
  \sum_{i=1}^k\sum_{e\in Q_i} c_e = \sum_{i=1}^k\left(\sum_{e\in Q^+_i} c_e +
  \sum_{e\in Q^-_i} c_e \right)\text{.}
\end{displaymath}
By the defining property of the major-tree order, each edge in $\OPT^+$ is
counted exactly twice in the above sum, while each edge of $\OPT^-$ is counted
at most $k$ times. At the same time, the weight of the edges in $\OPT^-$ amounts
to at most half the total weight of $E(\OPT)\setminus\OPT^k$. Hence,
\begin{align*}
  \sum_{i=1}^k\left(\sum_{e\in Q^+_i}\! c_e + \sum_{e\in Q^-_i}\! c_e \right)
  &\leq 2\!\sum_{e\in\OPT^+}\! c_e + k\sum_{e\in\OPT^-}\! c_e 
  = 2\hspace{-.4cm}\sum_{e\in E(\OPT)\setminus\OPT^k}\hspace{-.4cm} c_e + (k-2) \sum_{e\in\OPT^-} c_e \\
  &\leq (k/2+1)\hspace{-.4cm} \sum_{e\in E(\OPT)\setminus\OPT^k} \hspace{-.4cm}
  c_e\ \text{.}
\end{align*}
Analogously, we can derive a corresponding bound for the sum depending on
$\widehat Q_i$, i.e., 
\begin{displaymath}
  \sum_{i=1}^k\sum_{e\in \widehat Q_i} c_e 
  \leq (k/2 + 1) \hspace{-.4cm}\sum_{e\in E(\OPT)\setminus\OPT^k}
  \hspace{-.4cm}c_e \ \text{.}
\end{displaymath}
Since the sum over all costs of edges in
$E(\OPT)\setminus\OPT^k$ is exactly $\sum_{i=1}^{k-1}|\OPT^j|$, we can bound
the right-hand side of Inequality~\eqref{eq:NEvsQ} by
$(k+2)\sum_{i=1}^{k-1}|\OPT^j|$. 

Together, the two derived bounds for the left-hand side and the right-hand side
of Inequality~\eqref{eq:NEvsQ} give the claimed inequality.
\qed
\end{proof}

The following lemma encapsulates some technical calculations that allows to
derive an upper bound on the price of stability using Lemma~\ref{lem:pot-min
bound}.

\begin{lemma}
  \label{lem:beta}
  For a game $G$ with social optimum $\OPT$, let $\NE$ be a Nash equilibrium
  with $\Phi(\NE)\leq\Phi(\OPT)$ and let
  $\beta>0$ be such that $\sum_{j=1}^{k-1}|\NE^j| \leq
  \beta\sum_{j=1}^{k-1}|\OPT^j|$. Then, 
  \begin{displaymath}
    \sum_{j=1}^{k}|\NE^j| \leq \frac{\beta k}{1+\beta k} \H_k
    \cdot\sum_{j=1}^{k}|\OPT^j|\text{.}
  \end{displaymath}
\end{lemma}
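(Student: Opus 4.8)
The plan is to reduce everything to the two aggregate quantities that matter, namely the total cost of the $\NE$- and $\OPT$-edges split according to whether they are shared by all $k$ players or not. I would write $a=\sum_{j=1}^{k-1}|\NE^j|$, $b=|\NE^k|$, $c=\sum_{j=1}^{k-1}|\OPT^j|$ and $d=|\OPT^k|$, so that the hypothesis reads $a\le\beta c$ and the goal becomes $a+b\le\frac{\beta k}{1+\beta k}\H_k\,(c+d)$. I abbreviate the target constant as $\gamma=\frac{\beta k}{1+\beta k}\H_k$.

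First I would convert the potential hypothesis $\Phi(\NE)\le\Phi(\OPT)$ into a single linear inequality in $a,b,c,d$. Using $\Phi(\NE)=\sum_{j=1}^k \H_j|\NE^j|$ together with $\H_j\ge \H_1=1$ for $j<k$ gives the lower bound $\Phi(\NE)\ge a+\H_k b$, while $\H_j\le \H_{k-1}$ for $j<k$ gives the upper bound $\Phi(\OPT)\le \H_{k-1}c+\H_k d$. The crucial point is to keep $\H_{k-1}$ (rather than the coarser $\H_k$) on the non-shared part of $\OPT$; this is precisely where the improvement over the trivial $\H_k$-bound originates. Chaining these bounds through $\Phi(\NE)\le\Phi(\OPT)$ yields
\begin{displaymath}
  a+\H_k b\le \H_{k-1}c+\H_k d .
\end{displaymath}

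Next I would take the right nonnegative combination of this inequality and the hypothesis $a\le\beta c$. Multiplying the displayed inequality by $\lambda=\frac{\beta k}{1+\beta k}$ and multiplying $a\le\beta c$ by $\mu=\frac{1}{1+\beta k}$, then adding, the coefficient of $a$ becomes $\lambda+\mu=1$, the coefficient of $b$ becomes $\lambda \H_k=\gamma$, and the coefficient of $d$ becomes $\lambda \H_k=\gamma$. The coefficient of $c$ is $\lambda \H_{k-1}+\mu\beta=\frac{\beta(k\H_{k-1}+1)}{1+\beta k}$, and here the single identity that drives the whole argument is $\H_k-\H_{k-1}=\tfrac1k$, i.e.\ $k\H_{k-1}+1=k\H_k$, which collapses this coefficient to $\gamma$ as well. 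The combination therefore reduces to
\begin{displaymath}
  a+\gamma b\le\gamma(c+d).
\end{displaymath}
Finally, since $b\ge 0$ and $\gamma\ge 1$ (the latter holding for the values of $\beta$ of interest, in particular those produced by Lemma~\ref{lem:pot-min bound}, where $\beta\ge 1$), I may replace $\gamma b$ by $b$ on the left-hand side to obtain $a+b\le\gamma(c+d)$, which is exactly the claim.

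I expect the only genuinely delicate step to be guessing the weights $\lambda,\mu$; everything else is forced once one insists on the refined $\H_{k-1}$-bound for $\Phi(\OPT)$. A clean way to discover these weights is to view the statement as a small linear program in the variables $a,b,c,d$ --- maximizing $(a+b)/(c+d)$ subject to the two constraints --- whose optimum is attained at $b=0$ with both constraints tight; reading off the dual multipliers gives precisely $\lambda=\frac{\beta k}{1+\beta k}$ and $\mu=\frac{1}{1+\beta k}$. In the write-up I would present only the verification (the weighted sum and the harmonic identity $k\H_{k-1}+1=k\H_k$), since that is self-contained and avoids invoking duality.
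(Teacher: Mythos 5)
Your proof is correct and is essentially the paper's own argument in condensed form: your weights $\lambda=\frac{\beta k}{1+\beta k}$ and $\mu=\frac{1}{1+\beta k}$ are exactly the paper's $\frac{\alpha}{\H_k}$ and $1-\frac{\alpha}{\H_k}$ for its choice $\alpha=\frac{\beta k}{1+\beta k}\H_k$, and both arguments rest on the same three ingredients, namely $\H_j\ge 1$ on the equilibrium side, $\H_j\le\H_{k-1}$ on the optimum side, and the identity $k\H_{k-1}+1=k\H_k$. Your explicit caveat that the last step needs $\gamma\ge 1$ corresponds to the paper's (implicit) requirement that its $\alpha$ exceed $1$, which likewise holds only for the large values of $\beta$ actually supplied by Lemma~\ref{lem:pot-min bound}, so flagging it is a point in your favour rather than a gap.
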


\begin{proof}
We take $1<\alpha<\H_{k}$ and compute 
\begin{align*}
  \sum_{j=1}^{k}|\NE^j|  &\leq  \alpha|\NE_{k}|+\sum_{j=1}^{k-1}|\NE^j|
   =  \frac{\alpha}{\H_{k}}\cdot\sum_{j=1}^{k}\H_{j}|\NE^j| +
  \sum_{j=1}^{k-1}\left(1-\alpha\frac{\H_{j}}{\H_{k}} \right)|\NE^j|\\
   &\leq  \frac{\alpha}{\H_{k}}\cdot\sum_{j=1}^{k}\H_{j}|\NE^j| +
  \sum_{j=1}^{k-1}\left(1-\frac{\alpha}{\H_{k}}\right)|\NE^j|\text{.}
\end{align*}
Using that $\Phi(\NE) \leq \Phi(\OPT)$ and the condition of the lemma, we
introduce $\beta$
to get
\begin{eqnarray}
  \sum_{j=1}^{k}|\NE^j| 
  & \leq & \frac{\alpha}{\H_{k}}\cdot\sum_{j=1}^{k}\H_{j}|\OPT^j| +
  \left(1-\frac{\alpha}{\H_{k}}\right)\cdot\beta\sum_{j=1}^{k-1}
  |\OPT^j|\nonumber \\
  & = & \alpha|\OPT^k|+\sum_{j=1}^{k-1}\left[\alpha\frac{\H_{j}}{\H_{k}} +
  \beta\left(1-\frac{\alpha}{\H_{k}}\right)\right] |\OPT^j|\nonumber \\
  & \leq & \alpha|\OPT^k|+\sum_{j=1}^{k-1}\left[\alpha\frac{\H_{k-1}}{\H_{k}} +
  \beta\left(1-\frac{\alpha}{\H_{k}}\right)\right] |\OPT^j|\text{.}
  \label{eq:first}
\end{eqnarray}
For $\alpha=\frac{\beta k}{1+\beta k}\H_{k}$ we have
\begin{eqnarray}
  \alpha\frac{\H_{k-1}}{\H_{k}}+\beta\left(1-\frac{\alpha}{\H_{k}}\right) 
  & = & \frac{\beta k}{1+\beta k}\H_{k-1} + \beta\left(1-\frac{\beta k}{1+\beta
  k}\right)\nonumber \\
  & = & \frac{\beta k}{1+\beta k}\left(\H_{k}-\frac{1}{k}\right) +
  \frac{\beta}{1+\beta k} \nonumber \\
  &=& \alpha. 
  \label{eq:second}
\end{eqnarray}
From (\ref{eq:first}) and (\ref{eq:second}) it follows that 
\begin{displaymath}
  \sum_{j=1}^{k}|\NE^j| \leq \alpha|\OPT^k|+\alpha\sum_{j=1}^{k-1}|\OPT^j| =
  \frac{\beta k}{1+\beta k}\H_{k}\sum_{j=1}^{k}|\OPT^j|\text{,}
\end{displaymath}
which concludes the proof.
\qed\end{proof}

The above lemmas can be put together in order to show the following theorem.

\begin{theorem}\label{thm:general_upper_bound_on_PoS}
The potential-optimal price of anarchy $\POPoA(k)$ for Shapley network design
games with $k \geq 2 $ players is at most
$\frac{k^3(k+1)/2-k^2}{1+k^3(k+1)/2-k^2}\H_{k}.$
\end{theorem}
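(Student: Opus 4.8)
The plan is to assemble the theorem directly from Lemma~\ref{lem:pot-min bound} and Lemma~\ref{lem:beta}, after noting that a potential minimizer supplies exactly the hypothesis both lemmas share. Fix an arbitrary game $G$, a social optimum $\OPT$, and any potential minimizer $\NE\in\mathcal{F}(G)$. Because $\NE$ globally minimizes $\Phi$, we have $\Phi(\NE)\le\Phi(\OPT)$. I would also recall that $\cost(\NE)=\sum_{j=1}^{k}|\NE^j|$ and $\cost(\OPT)=\sum_{j=1}^{k}|\OPT^j|$, since the social cost equals the total cost of the edges in the underlying network. As $\POPoA(k)$ is the supremum of $\cost(\NE)/\cost(\OPT)$ over all games and all potential minimizers, it suffices to bound this ratio for the fixed pair $(\NE,\OPT)$ by the claimed quantity.

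I would then split into two cases according to whether $\OPT^k$ is empty. In the main case $\OPT^{k}\neq\emptyset$, Lemma~\ref{lem:pot-min bound} applies and yields $\sum_{j=1}^{k-1}|\NE^j|\le\beta\sum_{j=1}^{k-1}|\OPT^j|$ with $\beta=k^2(k+1)/2-k$. Feeding this $\beta$, together with $\Phi(\NE)\le\Phi(\OPT)$, into Lemma~\ref{lem:beta} gives $\cost(\NE)=\sum_{j=1}^{k}|\NE^j|\le\frac{\beta k}{1+\beta k}\H_{k}\cdot\sum_{j=1}^{k}|\OPT^j|=\frac{\beta k}{1+\beta k}\H_{k}\cdot\cost(\OPT)$. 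All that remains is the substitution $\beta k=k^3(k+1)/2-k^2$, which turns $\frac{\beta k}{1+\beta k}$ into exactly $\frac{k^3(k+1)/2-k^2}{1+k^3(k+1)/2-k^2}$, matching the statement.

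In the remaining case $\OPT^{k}=\emptyset$, Lemma~\ref{lem:pot-min bound} is not available, so I would fall back on the elementary potential argument, now exploiting that no edge of $\OPT$ is used by all $k$ players. Hence $\Phi(\OPT)=\sum_{j=1}^{k-1}\H_{j}|\OPT^j|\le\H_{k-1}\cost(\OPT)$, and combining this with $\cost(\NE)\le\Phi(\NE)\le\Phi(\OPT)$ gives $\cost(\NE)\le\H_{k-1}\cost(\OPT)$. To close the case I would verify the purely numerical inequality $\H_{k-1}\le\frac{k^3(k+1)/2-k^2}{1+k^3(k+1)/2-k^2}\H_{k}$, which (writing $\H_{k-1}=\H_k-\tfrac1k$) reduces to $k\H_{k}\le 1+k^3(k+1)/2-k^2$ and holds comfortably for every $k\ge 2$, since the right-hand side grows like $k^4/2$ while $k\H_{k}=O(k\log k)$.

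I expect the two lemmas to carry essentially all the weight, so the theorem itself is a short assembly. The only genuine care needed is not to overlook the degenerate case $\OPT^{k}=\emptyset$, since Lemma~\ref{lem:pot-min bound} is stated only for $\OPT^{k}\neq\emptyset$, and then to confirm that the sharper $\H_{k-1}$-bound obtained there is dominated by the target expression. The algebraic identity linking $\beta$ to the stated fraction is routine and is, in effect, already prepared inside Lemma~\ref{lem:beta} through the choice $\alpha=\frac{\beta k}{1+\beta k}\H_{k}$.
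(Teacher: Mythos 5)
Your proposal is correct and follows essentially the same route as the paper: combine Lemma~\ref{lem:pot-min bound} with $\beta=k^2(k+1)/2-k$ and Lemma~\ref{lem:beta} when $\OPT^k\neq\emptyset$, and fall back on the elementary potential bound $\cost(\NE)\leq\Phi(\NE)\leq\Phi(\OPT)\leq \H_{k-1}\cost(\OPT)$ when $\OPT^k=\emptyset$. The only difference is that you explicitly verify the numerical comparison $\H_{k-1}\leq\frac{k^3(k+1)/2-k^2}{1+k^3(k+1)/2-k^2}\H_k$ via $k\H_k\leq 1+k^3(k+1)/2-k^2$, which the paper asserts without proof; this is a welcome, correct addition.
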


\begin{proof}
  Let $\NE$ be a potential minimum. If $\OPT^k\neq\emptyset$, we may combine
  Lemma~\ref{lem:pot-min bound} and Lemma~\ref{lem:beta} to obtain
  \begin{displaymath}
    \frac{\mathrm{cost}(\NE)}{\mathrm{cost}(\OPT)} \leq
    \frac{k^3(k+1)/2-k^2}{1+k^3(k+1)/2-k^2}\H_{k}\text{.}
  \end{displaymath}
  If, on the other hand, $\OPT^k=\emptyset$, then obviously also $|\OPT^k|=0$.
We 
  show that then $\mathrm{cost}(\NE)\leq \H_{k-1} \mathrm{cost}(\OPT)$
  (which has been observed before for $k=3$, e.g., by Christodoulou et
  al.\,\cite{Christodoulou+etal/2009}). 
  We can express the potential functions of $\NE$ and $\OPT$ using $\NE^j$ and
  $\OPT^j$ and use the fact that \mbox{$\Phi(\NE)\leq\Phi(\OPT)$} to obtain
  \begin{displaymath}
    \sum_{j=1}^{k}|\NE^j| \leq \sum_{j=1}^{k}\H_{j}|\NE^j| \leq
    \sum_{j=1}^{k}\H_{j}|\OPT^j| = \sum_{j=1}^{k-1}\H_{j}|\OPT^j| \leq
    \H_{k-1}\sum_{j=1}^{k}|\OPT^j|\text{.}
  \end{displaymath}
  Hence, in this case we get 
$$    \frac{\mathrm{cost}(\NE)}{\mathrm{cost}(\OPT)}\leq\H_{k-1}
    <\frac{k^3(k+1)/2-k^2}{1+k^3(k+1)/2-k^2}\H_{k}\text{.}
  $$
  %
  In both cases, the cost of the Nash equilibrium is at most
  $\frac{k^3(k+1)/2-k^2}{1+k^3(k+1)/2-k^2}\H_{k}$ times the cost of the social
optimum.
  \qed
\end{proof}

We obtain the following corollary. Note that the bound is tight
for~$k=2$.

\begin{corollary}
The price of stability $\PoS(k)$ for Shapley network design game with $k \geq 2$
players
is at most $\frac{k^3(k+1)/2-k^2}{1+k^3(k+1)/2-k^2}\H_{k}$.
\end{corollary}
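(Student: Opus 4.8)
The plan is to obtain the corollary as an immediate consequence of Theorem~\ref{thm:general_upper_bound_on_PoS} together with the elementary ordering of inefficiency measures established in Section~\ref{sec:definition}. The crucial point is that a potential minimum is always a particular Nash equilibrium, so a bound on the cost of potential minimizers automatically bounds the cost of the \emph{best} Nash equilibrium, which is exactly what $\PoS$ measures. Hence no new structural argument about the game is required beyond what the theorem already supplies.

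Concretely, I would first recall that for every fixed game $G$ we have $\mathcal{F}(G)\subseteq\mathcal{N}(G)$, and therefore $\PoS(G)\leq\POPoS(G)\leq\POPoA(G)$; in particular $\PoS(G)\leq\POPoA(G)$ for every $G\in\mathcal{G}(k)$. Taking the supremum over all such games on both sides preserves the inequality and yields $\PoS(k)\leq\POPoA(k)$, so invoking Theorem~\ref{thm:general_upper_bound_on_PoS} gives the claimed bound. Equivalently, and perhaps more transparently, one can argue at the level of a single game: since every Shapley network design game admits a potential minimum $\NE$ and $\NE\in\mathcal{N}(G)$, the definition of $\PoS$ as a minimum over $\mathcal{N}(G)$ gives $\PoS(G)\leq\cost(\NE)/\cost(\OPT)$, and the proof of Theorem~\ref{thm:general_upper_bound_on_PoS} already shows that this ratio is at most $\frac{k^3(k+1)/2-k^2}{1+k^3(k+1)/2-k^2}\H_k$ for \emph{any} potential minimum $\NE$.

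I do not anticipate a genuine obstacle, as all of the substantive work is carried by Lemma~\ref{lem:pot-min bound} and Lemma~\ref{lem:beta}, which together establish Theorem~\ref{thm:general_upper_bound_on_PoS}. The only care needed is to note that the theorem is phrased for $\POPoA$ rather than $\PoS$, so one must explicitly invoke the inclusion $\mathcal{F}(G)\subseteq\mathcal{N}(G)$ to transfer the bound, and to check that the supremum step is legitimate, which it is since the inequality $\PoS(G)\leq\POPoA(G)$ holds uniformly over $\mathcal{G}(k)$.
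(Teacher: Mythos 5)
Your argument is correct and is exactly the route the paper intends: the corollary is stated as an immediate consequence of Theorem~\ref{thm:general_upper_bound_on_PoS} via the inclusion $\mathcal{F}(G)\subseteq\mathcal{N}(G)$ and the resulting chain $\PoS(G)\leq\POPoS(G)\leq\POPoA(G)$ from Section~\ref{sec:definition}. Taking the supremum over $\mathcal{G}(k)$ as you do is the only remaining step, and it is legitimate for the reason you give.
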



\section{Three-Player Games}

\looseness=-1 
The upper bound of $\frac{k^3(k+1)/2-k^2}{1+k^3(k+1)/2-k^2}\H_{k}$ on the price
of
stability presented in the previous section is valid for an arbitrary number of
players. For the special case of $k=3$ players, it evaluates to $165/92 \approx
1.793$. For this case, however, better bounds are known. Bil\`o and Bove proved
that the price of stability does not exceed $286/175 \approx 1.634$. For the
proof of their result they combine inequalities that are valid for any potential
minimum of the game. Thus, their proof implies that also the potential-optimal
price of anarchy (and thus the potential-optimal price of stability) is at most
$286/175$.
As the main result of this section, we will show that this result is tight. That
is, there is a three-player game such that the cost of the best potential
minimum is $286/175$ times the cost of the social optimum.

\begin{theorem}
  For three players, the potential-optimal price of stability and the
  potential-optimal price of anarchy are $\POPoA(3) \!=\! \POPoS(3) \!=\!
  286/175\!\approx\! 1.634$.
\end{theorem}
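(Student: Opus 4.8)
The theorem has two claims: the upper bound $\POPoA(3) \le 286/175$ (which the excerpt already attributes to Bilò and Bove, and is therefore available to cite) and the matching lower bound, namely the existence of a three-player game whose best potential minimum costs exactly $286/175$ times the social optimum. Since the upper bound is cited, the entire content of the proof is the construction of a tight example, so I need to exhibit a concrete instance and verify that (i) it has a potential minimizer, (ii) every potential minimizer has cost exactly $286/175$ times the optimum, and (iii) the social optimum is what I claim it is.

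The plan is to reverse-engineer an explicit graph from the structure of the Bilò–Bove upper-bound proof. The bound $286/175$ is a rational number arising from combining a fixed family of linear inequalities in the quantities $|\NE^j|$ and $|\OPT^j|$ for $j \in \{1,2,3\}$; the extremal instance should make all the inequalities used in that combination tight simultaneously. So first I would identify, from the structure of the bound, the target profile of edge-cost masses: the values of $|\OPT^1|,|\OPT^2|,|\OPT^3|$ and $|\NE^1|,|\NE^2|,|\NE^3|$ at which the bound $\cost(\NE)/\cost(\OPT) = 286/175$ is attained with equality. Because $\OPT^3 \neq \emptyset$ is the interesting regime (the case $\OPT^3=\emptyset$ only gives $\H_2$), the example should route all three players through a shared bottleneck in the optimum.

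Next I would realize these target masses as an actual network. I expect a gadget with two terminal-clusters joined by a cheap shared edge (the $\OPT^3$ edge used by all three players in the optimum) together with a separate, more expensive ``star'' or ``parallel-paths'' structure that the players prefer in the equilibrium because of fair cost-sharing. Concretely I would assign integer or rational edge costs and the three source-target pairs $s_i,t_i$ so that: the social optimum is the forest using the shared edge, of cost equal to the chosen $\sum_j |\OPT^j|$; and the unique (or best) potential minimizer spreads the players over the expensive edges, of cost $\sum_j |\NE^j|$, with the ratio exactly $286/175$. I would then verify directly that no player can unilaterally deviate to reduce her individual cost (Nash condition) and, crucially, that this profile minimizes Rosenthal's potential $\Phi = \sum_e \H_{k_e} c_e$ by checking it against the finitely many competing profiles — in particular against the social optimum, confirming $\Phi(\NE) < \Phi(\OPT)$ so that the optimum is not itself the potential minimizer.

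The main obstacle is the potential-minimality check rather than the Nash check. Being a Nash equilibrium is a local (one-player-deviation) condition and is mechanical to verify, but proving that a profile is a global minimizer of $\Phi$ requires ruling out all other strategy profiles. In a small hand-designed gadget this is tractable: I would argue that the graph is small enough that the relevant strategy combinations are few, and that by the structure of the costs any profile not equal to my candidate has strictly larger potential. The verification that the claimed optimum truly minimizes social cost is a second, easier obstacle handled by the forest observation from Section~\ref{sec:UB-k-players}. Finally I would note, as the introduction already flags, that in this example the social optimum happens to also be a Nash equilibrium, which is why the construction gives a tight bound for $\POPoS(3)$ and $\POPoA(3)$ without improving the ordinary price of stability.
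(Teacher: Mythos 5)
Your proposal correctly identifies the architecture of the proof --- cite the Bil\`o--Bove upper bound, then exhibit an explicit game whose best potential minimizer attains the ratio, with the decisive verification being global potential-minimality rather than the Nash condition --- and your guess at the gadget's shape (a cheap edge shared by all three players in the optimum, plus a more expensive structure that the potential minimizer prefers) matches the paper's actual construction. You also correctly anticipate the remark that the social optimum is itself a Nash equilibrium in such an example, so it yields no new lower bound on the ordinary price of stability.

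However, there is a genuine gap: you never produce the instance. The entire mathematical content of the lower bound is a concrete graph with concrete edge costs for which one can exhaustively verify that the potential minimizer is unique and achieves the target ratio; everything else is routine. The paper's example is a small network on the three terminal pairs with edge costs $209+\epsilon$, $209+\epsilon$, $282+\epsilon$ (forming the optimum of cost $700+3\epsilon$, with the $282+\epsilon$ edge shared by all three players) and $396$, $374$, $374$ (forming the unique potential minimum of cost $1144$, each edge used by a single player), so that the ratio tends to $1144/700 = 286/175$ as $\epsilon\to 0$. Finding cost values for which the expensive profile strictly beats \emph{every} competing profile in potential --- including the optimum itself, whose potential is $2\H_2(209+\epsilon)+\H_3(282+\epsilon)>1144$ --- is exactly the nontrivial step your plan defers with ``I would assign integer or rational edge costs so that\dots''. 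A reverse-engineering heuristic from the tightness pattern of the upper-bound inequalities is a reasonable way to search for such costs, but until the numbers are fixed and the finitely many competing profiles are checked, the claim $\POPoS(3)\geq 286/175$ is not established. Note also that the bound is attained only in the limit $\epsilon\to 0$ (as a supremum over games), not exactly by a single instance, which your phrasing ``costs exactly $286/175$ times the social optimum'' glosses over.
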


\begin{proof}
\looseness=-1 
The upper bound of $286/175$ on the potential-optimal price of anarchy was
proved by Bil\`o and Bove~\cite[Theorem~3.1]{Bilo+Bove/2011}. They derive the
bound for the potential-optimal price of anarchy, but only
explicitly state the implied bound for the (regular) price of stability.

Consider the three player game in Figure~\ref{POPoS_lower_bound_fig}, and let 
$\epsilon>0$ be sufficiently
small. The potential-optimal price of stability of this example approaches
$286/175$ when $\epsilon$ tends to $0$, which establishes tightness of the
upper bound. 
It also shows that the
potential-optimal price of stability and the potential-optimal price of anarchy
coincide for the class of games with three players.

Obviously, any strategy profile in the example has to use at least three edges
to connect all terminal pairs. The three cheapest edges already connect all
terminal pairs and thus constitute the social optimum $\OPT$ of cost
$700+3\epsilon$.
It is easy to verify that the underlying networks of Nash equilibria in the example
do not
contain cycles, since at least one edge of each cycle would be abandoned by all
players. Hence, all Nash equilibria in the example use exactly three edges.
We show that the \emph{unique} potential minimum $\NE$ uses the three edges not
used in~$\OPT$ (note that $\OPT$ itself is a Nash equilibrium). This profile has
both a potential-function value and cost equal to $396+2\cdot374=1144$, since
every edge is used by one player only. In contrast, the social optimum has a
potential-function value of $2\H_2\cdot(209+\epsilon) + \H_3\cdot(282+\epsilon)
> 1144$. 
If the edge $\{t_2,s_3\}$ is used in a Nash equilibrium, the other two edges
have to be used by at least two players each. For profiles other than $\OPT$,
this gives a potential function value of at least 
$\H_2\cdot(209+\epsilon+374) + (282+\epsilon) > 1156$. 
If the edges with cost 396 and $(282+\epsilon)$ are both unused,
all three players use an edge of cost 374 and the resulting potential value is
either $\H_3\cdot 374 + \H_2\cdot(209+\epsilon) + (209+\epsilon) > 1208$ or 
$\H_2\cdot 374 + 374 + (209+\epsilon) > 1144$. 
Equilibria that use one edge each of costs $(209+\epsilon)$, 374, 396 have a
potential function value of $H_2\cdot396+374+(209+\epsilon) > 1177$. And
finally, the
profile using both cheap edges together with the one of cost $396$ has a
potential of $\H_3\cdot 396+2\cdot(209+\epsilon) > 1144$. 
We conclude that the potential minimum is as claimed. For $\epsilon$ tending to
$0$, the ratio between the cost of $\NE$ and~$\OPT$ approaches $286/175$.
\qed
\end{proof}

\begin{figure}[t] 
\centering
\subfigure[\label{POPoS_lower_bound_fig}]{
\includegraphics{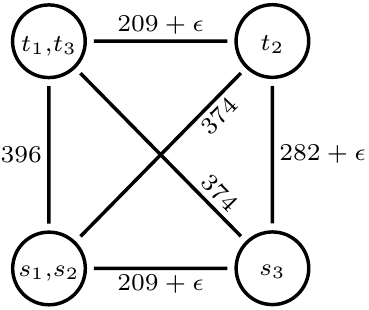}
}
\hspace{1cm}
\subfigure[\label{fig:lb2}]{
\includegraphics{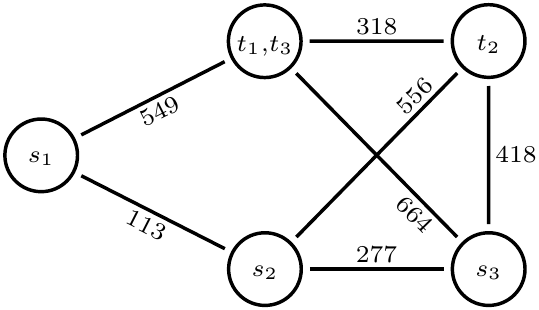}
}
\caption{(a) A three-player game with $\POPoS$ and $\POPoA$ approaching
$286/175 \approx 1.634$ for $\epsilon \to 0$. (b) A three-player game with $\PoS
= 1769/1126 \approx 1.571.$}
\end{figure}

Our result in particular implies that it is impossible to push the upper bound
on the price of stability for three-player Shapley network design games on
undirected networks below $286/175$ by using inequalities that are only valid
for global minima of the potential function.  Note that the example in
Figure~\ref{POPoS_lower_bound_fig} has a price of stability of 1, since its
social optimum is itself a Nash equilibrium.

So far, the best lower bound on the price of stability for three-player games
was
$74/48 \approx 1.542$~\cite{Christodoulou+etal/2009}.
We can slightly improve this bound by presenting a game with three players whose
price of stability is $1769/1126 \approx 1.571$. Consider the network with 5
vertices shown in Figure~\ref{fig:lb2}. By exhaustive enumeration of all
strategy profiles, one can verify that only the strategy profile in which each
player~$i$ uses the edge $(s_i,t_i)$ is a Nash equilibrium. The social optimum
uses all other edges and has a cost of 1126, while the unique Nash equilibrium
has a cost of 1769. This establishes the claimed lower bound on the price of
stability in undirected Shapley network design games with three players.


\section{Conclusions}

We gave an upper bound for the price of stability for an arbitrary number of
players~$k$ in undirected graphs. Our bound is smaller than $\H_k$ for every $k$
and tight for two players. For three players, we showed that the upper bound of
$286/175\approx 1.634$ by Bil\`o and Bove~\cite{Bilo+Bove/2011} is tight for
both the potential-optimal price of stability and anarchy. We also improved the
lower bound to $1769/1126 \approx 1.571$ for the price of stability in this
case.

Asymptotically, a wide gap remains between the upper bound of the price of
stability that is of order $\log k$, and the best known lower bound construction
by Bil\`{o} et al.\,\cite{Bilo+etal/2010} that approaches the constant 
$348/155 \approx 2.245$. It is unclear where the correct answer lies within this
gap, in particular since bounds of $O(\log k/\log \log k)$~\cite{Li/2009} and
$O(\log \log k)$~\cite{Fiat+etal/2006} emerge for restrictions of the problem.
Our bound approaches $\H_k$ for a growing number of players $k$. It would already
be interesting to know whether the price of stability is asymptotically below
$\H_{k-c}$ or $\H_{k/c}$ for some (or even any) positive constant $c\in\N$.

\vspace{8pt}
\noindent\textbf{Acknowledgements.} This work has been partially supported by
the Swiss National Science Foundation (SNF) under the grant number
200021\_143323/1.
We thank Julian Mestre for pointing out a mistake in the proof of
Lemma~\ref{lem:pot-min bound} in an earlier version of this paper.


\bibliographystyle{splncs}
\bibliography{NetworkDesignGames}

\end{document}